\documentclass{article}

\usepackage{microtype}
\usepackage{graphicx}
\usepackage{subcaption}
\usepackage{placeins}

\usepackage{booktabs}

\usepackage{amsmath}
\usepackage{amssymb}
\usepackage{mathtools}
\usepackage{amsthm}
\usepackage{mathtools}
\usepackage{bbm}

\usepackage{comment}
\usepackage[textsize=tiny]{todonotes}
\usepackage{tikz}
\usepackage{hyperref}
\usepackage[capitalize,noabbrev]{cleveref}

\newcommand{\Ncal}{\mathcal{N}}
\newcommand{\Acal}{\mathcal{A}}

\DeclareMathOperator*{\expec}{\mathbb{E}}
\DeclarePairedDelimiter{\abs}{\lvert}{\rvert}

\theoremstyle{plain}
\newtheorem{theorem}{Theorem}[section]

\newtheorem{corollary}[theorem]{Corollary}
\theoremstyle{definition}
\newtheorem{definition}[theorem]{Definition}

\theoremstyle{remark}

\usepackage{wasysym}

  \usepackage[preprint]{neurips_2026}

\usepackage[utf8]{inputenc} 
\usepackage[T1]{fontenc}    
\usepackage{hyperref}       
\usepackage{url}            
\usepackage{booktabs}       
\usepackage{amsfonts}       
\usepackage{nicefrac}       
\usepackage{microtype}      
\usepackage[dvipsnames]{xcolor}         

\title{Simultaneous Gradient Learning \\ in First-Price Auctions}

\author{%
  Janik Bürgermeister\thanks{Corresponding author; \texttt{janik.buergermeister@tum.de}} \\
  Department of Computer Science\\
  Technical University of Munich\\
  Munich, Germany \\
  \And
  Julius Durmann \\
  Department of Computer Science\\
  Technical University of Munich\\
  Munich, Germany \\
  \And
  Martin Bichler \\
  Department of Computer Science\\
  Technical University of Munich\\
  Munich, Germany \\
  \And
  Mete Şeref Ahunbay \\
  Department of Computer Science\\
  University of Oxford\\
  Oxford, UK \\
  \And
  Bary Pradelski \\
  Department of Computer Science \\
  CNRS \\
  Grenoble, France \\
  \And
  Marco Scarsini \\
  Dpt. of Econ. and Financial Markets  \\
  Luiss University\\
  Rome, Italy \\
}

\begin{document}

\maketitle

\begin{abstract}
We study equilibrium learning in discretized Bayesian games, focusing on first-price auctions as a testbed for gradient-based learning dynamics under private information. While the symmetric independent private values model admits a unique Bayesian correlated equilibrium (BCE) in the continuous limit, standard convergence guarantees for gradient-based learners do not apply: classical no-regret theory predicts convergence only to Bayesian coarse correlated equilibria (BCCE), which can remain large and far from the competitive equilibrium even under fine discretizations.
To bridge this gap, we introduce the notion of Bayesian semicoarse correlated equilibrium (BSCCE), an equilibrium concept that refines BCCE by restricting deviations to probability-preserving transformations consistent with projected gradient dynamics. We provide a linear programming characterization of BSCCE with polynomially many variables and constraints and establish the inclusion hierarchy $BCE \subseteq BSCCE \subseteq BCCE$. 
Our main finding is that the BSCCE set contracts in many cases as the discretization of the Bayesian game is refined. Even with priors for which BCCE fails to concentrate, the diameter of the BSCCE polytope shrinks toward zero, indicating convergence to the unique continuous-game Bayes-Nash equilibrium. 
These results provide a principled explanation for the observed robustness of gradient-based learning in discretized Bayesian games, showing that these dynamics select a smaller equilibrium set than standard no-regret theory predicts. 
\end{abstract}

\section{Introduction}

Learning in multi-agent systems fundamentally differs from single-agent learning: strategy updates are shaped by the adaptation of other agents. As a result, learning dynamics in games need not converge, may cycle, exhibit chaotic behavior, or converge to different outcomes depending on initial conditions.
Understanding the subset of games where commonly used learning algorithms converge to meaningful equilibrium predictions remains a central challenge at the intersection of machine learning, optimization, and game theory.

Somewhat surprisingly, recent papers observe convergence close to the Bayes-Nash equilibria (BNE) in economically relevant Bayesian games \citep{bichler2023computing, bichler_learning_2021}.
These methods build on gradient learning, and we will focus on its simplest variant, projected gradient ascent, in this paper. 

We will deliberately restrict our attention to first-price single-item auctions. This game is a canonical example used in large-scale applications such as display advertisements. However, despite the uniqueness of the BNE, the empirical convergence of gradient-based algorithms has not yet been fully theoretically justified.
Here, \textit{we aim to provide an explanation for the convergence of projected gradient ascent in first-price auctions under increasingly finer discretizations.} 
The BSCCE, however, is a broader concept and can be applied to all games with private information.

\paragraph{Gradient Dynamics in Games.}

Motivated by their algorithmic simplicity, recent work studies gradient-based learning dynamics in games 
\citep{
mertikopoulos_riemannian_2018, mertikopoulos_unified_2024, cai_fast_2024}.
While it is known that Nash equilibria are equilibria of the resulting dynamical system,
strong assumptions, including strict monotonicity, are usually needed to guarantee uniqueness and the convergence of the dynamics to such an equilibrium in last-iterate, independent of initial conditions \citep{mertikopoulos_learning_2018, zhang_stability_1995}. Similarly, Lyapunov-function-based arguments can be used to show stability \citep{mertikopoulos_learning_2016}. 
In general, the dynamics of a game may also cycle, prohibiting last-iterate convergence statements in general.

Dynamical systems offer a powerful toolset, but the analysis is game-specific and often requires significant effort to use (see, e.g., \cite{bichler_convergence_2024}). 
In Bayesian games with private information, the picture is even less well understood: equilibrium strategies are functions that map types to actions, which further complicates the analysis.
In particular, \cite{bichler_convergence_2024, bichler2025beyond} have shown that the first-price auction satisfies neither the monotonicity nor the (weaker) Minty condition, which rules out many of the related convergence statements. 

\paragraph{No-regret Learning and Bayesian (Coarse) Correlated Equilibria.}

As an alternative to dynamical systems, we draw on no-regret learning theory, shifting the focus from last-iterate to time-average convergence. Indeed, it is well known that no-regret algorithms converge to (Bayesian) coarse correlated equilibria (BCCE) \citep{hartline2015no}. Similarly, no-internal-/no-swap-regret algorithms converge to (Bayesian) correlated equilibria (BCE) \citep{gordon_no-regret_2008}. 
However, the BCCE can be a large superset of the BNEs of a game, and this result says little about the convergence of no-regret algorithms to a BNE. Meanwhile, no-internal regret algorithms are much more complex, and BCEs may be too strict to capture the behavior of gradient learning algorithms.

In the context of auctions, \citet{ahunbay2024uniquenessbayesiancoarsecorrelated} recently showed that the set of BCCEs collapses to the unique BNE in all-pay auctions with continuous types, which proves convergence of no-regret algorithms to an approximate BNE in discretized versions of the auction. 
However, the authors show that this explanation is insufficient in first-price auctions: the BCCE set remains large for non-concave priors even under arbitrarily fine discretizations and can contain outcomes far from the competitive equilibrium. Thus, standard no-regret guarantees do not explain why gradient-based learners consistently converge close to the BNE in this setting.

Our research relates to the work by \citet{bergemann_first-price_2017}
which analyzes BCEs under varying levels of information provided to the bidders. They find a subset relation for the BCEs as information quality increases (modeled as "information structures"), meaning that as information quality increases, the BCE set shrinks. 
Slightly different from our setup, their model assumes that bidders do not observe their private valuations directly, but instead receive a signal chosen by the auctioneer.

\paragraph{Contributions.}

The large support of BCCE warrants a closer inspection of gradient algorithms in first-price auctions.
The central contribution of this paper is to refine the equilibrium notion used to analyze gradient-based learning in Bayesian games. 
We introduce the \emph{Bayesian Semicoarse Correlated Equilibrium (BSCCE)}, which captures the outcomes of gradient dynamics more precisely than prior concepts.
BSCCE is motivated by the observation that gradient-based algorithms can realize only a restricted class of deviations, namely, tangential, probability-preserving transformations of mixed strategies, which correspond to gradients of potential functions.

BSCCE lies between BCE and BCCE as it restricts (expands) the class of deviations relative to BCE (BCCE). \Cref{fig:Polytope_uniform} illustrates the maximal support across equilibrium concepts in a first-price auction.

\begin{figure}[htb]
    \centering
     \begin{subfigure}[t]{0.24\linewidth}
        \centering
        \includegraphics[width=\linewidth]{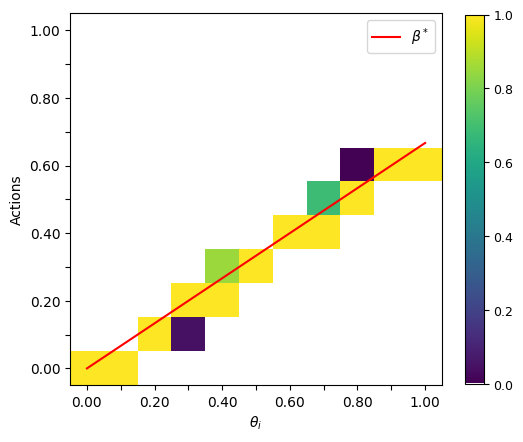}
        \caption{BNE}
        \label{fig:BNE_uniform}
    \end{subfigure}
    \begin{subfigure}[t]{0.24\linewidth}
        \centering
        \includegraphics[width=\linewidth]{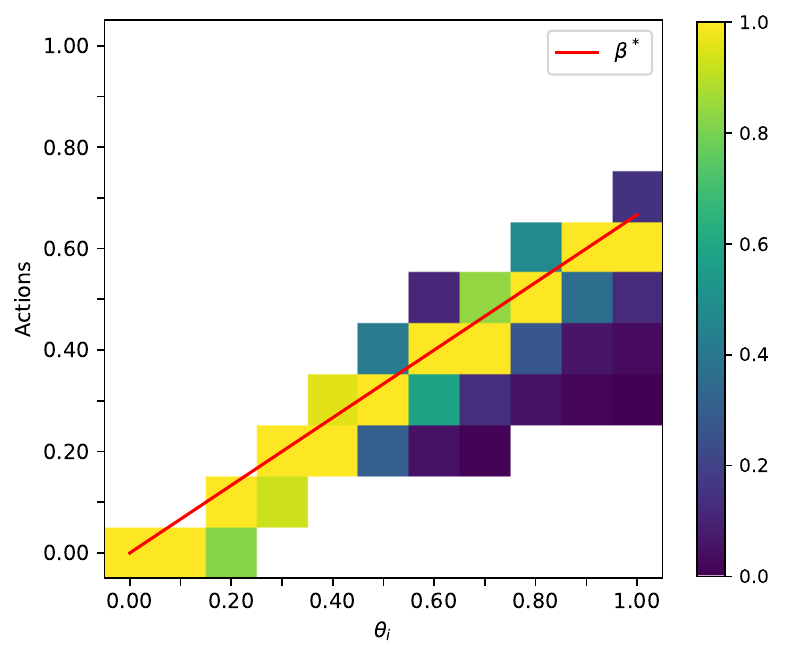}
        \caption{BCE}
        \label{fig:BCE_uniform}
    \end{subfigure}
    \begin{subfigure}[t]{0.24\linewidth}
        \centering
        \includegraphics[width=\linewidth]{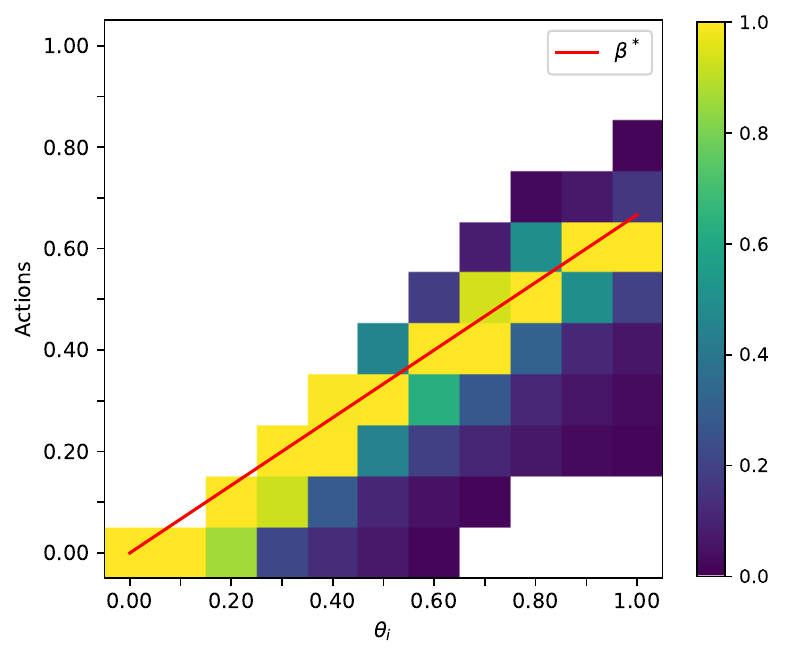}
        \caption{BSCCE}
        \label{fig:BSCCE_uniform}
    \end{subfigure}
    \begin{subfigure}[t]{0.24\linewidth}
        \centering
        \includegraphics[width=\linewidth]{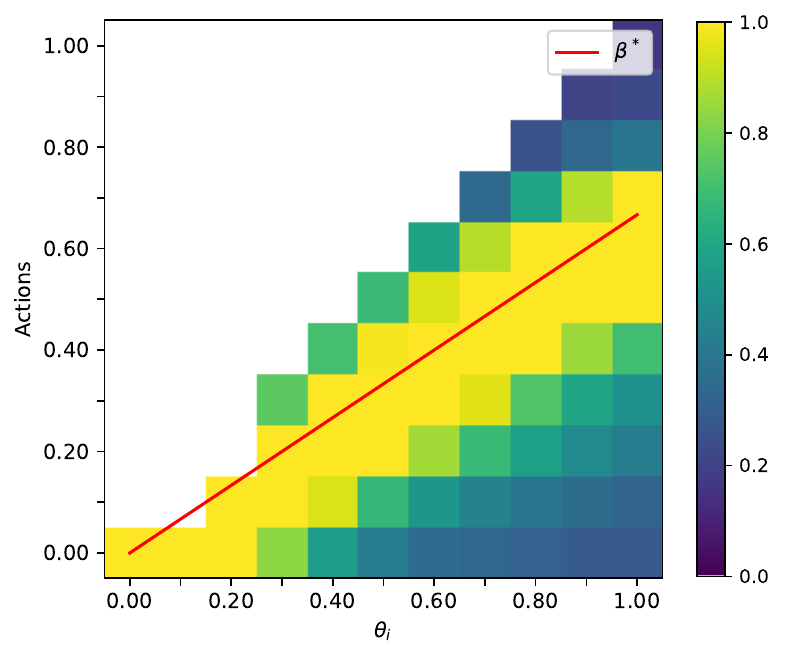}
        \caption{BCCE}
        \label{fig:BCCE_uniform}
    \end{subfigure}

    \caption{Maximal conditional bid probabilities ($\sigma(a_i \mid \theta_i)$) in the equilibrium polytopes for a $\mathrm{Beta}(2,1)$ prior using uniform discretization in a first-price auction.}
    \label{fig:Polytope_uniform}
\end{figure}

Formally, BSCCE extends the notion of Semicoarse Correlated Equilibrium (SCCE), originally developed for complete-information games \citep{ahunbay2025semicoarsecorrelatedequilibrialpbased}, to Bayesian settings. 
However, this extension must be handled carefully, as the strategic normal form of a Bayesian game is exponential in the number of types. Thus, applying the SCCE to the strategic normal form is computationally infeasible.
In contrast, our approach is based on the partial Bayesian approach \cite{forges_five_1993}, which allows a polynomial characterization of BSCCE in terms of the game size and thus makes BSCCE tractable in discretized games.

BSCCE is defined via linear incentive constraints against a class of deviations, restricted to affine, symmetric, probability-preserving deviations that capture the geometry of projected gradient ascent. This approach relates to the concept of $\Phi$-equilibria, as introduced by \citet{GreenwaldJafari2003} and \citet{StoltzLugosi2007}. 
Another related variant of $\Phi$-equilibria are linear correlated equilibria (LCE), for which \cite{Daskalakis-LCE-25} provide a polynomial algorithm that has no-$\Phi$-regret in convex games. The deviations leading to LCEs include those considered in this paper. Therefore, LCEs are a stricter notion of equilibrium ($\mathrm{LCE} \subseteq \mathrm{SCCE}$). We focus on (B)SCCE because of its relation to gradient learners, which has been established by \cite{ahunbay_first-order_2025, ahunbay2025semicoarsecorrelatedequilibrialpbased}.

We apply BSCCEs to discrete first-price auctions with different priors. Our results show convergence of gradient-based algorithms in many first-price auctions as the discretization is refined. This includes auctions with non-concave priors where the BCCE set remains large \citep{ahunbay_first-order_2025}.

Taken together, these results justify the empirical success of gradient-based learning in discretized Bayesian games, selecting a smaller equilibrium set than standard no-regret theory predicts. 
A full theory for the convergence of gradient algorithms remains an open task, but our proposed concept provides important insights on the convergence of gradient-based learning algorithms.

\section{The Model}
We consider strategic interactions under incomplete information. We begin by defining the standard game-theoretic concepts for an incomplete information setting. 
Let $\Delta(\mathcal{H})$ denote the set of probability distributions over a measurable space $(\mathcal{H}, \mathcal{B}(\mathcal{H}))$, where $\mathcal{B}(\mathcal{H})$ is the Borel $\sigma$-algebra associated with the metric space $\mathcal{H}$.

\begin{definition}[Bayesian Game]
\label{def:bayesian_game}
A \emph{Bayesian game} is a tuple $\mathcal{G} = (\Ncal, \Theta, \Acal, u, \pi)$, where 
\begin{itemize}
    \item $\Ncal = \{1, \dots, n\}$ is a finite set of players.
    \item $\Theta = \Theta_1 \times \cdots \times \Theta_n$ is the type space.
    \item $\Acal = \Acal_1 \times \cdots \times \Acal_n$ is the space of action profiles.
    \item $u = (u_1, \dots, u_n)$ are utility functions, where\\
    $u_i: \Theta_i \times \Acal \to \mathbb{R}$ specifies the payoff of player $i$.
    \item $\pi \in \Delta(\Theta)$ is the joint prior distribution of types.
\end{itemize}
\end{definition}
When discussing distributions such as $\sigma(\theta, a)$, we denote the marginal distributions by omitting the marginalized variables. For example, $\sigma(a)$ denotes the marginal distribution over actions. We denote conditional distributions and expectations with vertical bars; for instance, $\sigma(a_{-i} \mid \theta_i, a_i)$ is the conditional probability of the opponents' actions $a_{-i}$ given player $i$'s action $a_i$ and type $\theta_i$. 

A \emph{pure strategy} for player $i$ is a mapping $s_i: \Theta_i \to \Acal_i$. A \emph{mixed strategy} $x_i: \Theta_i \to \Delta(\Acal_i)$ maps types to distributions over actions. 

The expected utility of player $i$ given a strategy profile $x$ and type $\theta_i$ (\emph{interim utility}) is $u_i(x_i, x_{-i} \mid \theta_i)$:
\begin{align*}
    \mathbb{E}_{\theta_{-i}, a \sim x} [u_i(\theta_i, a) \mid \theta_i] =
    \sum_{\theta_{-i} \in \Theta_{-i}} \pi(\theta_{-i} \mid \theta_i) \sum_{a \in \mathcal{A}} \left( \prod_{j \in \mathcal{N}} x_j(a_j \mid \theta_j) \right) u_i(\theta_i, a).
\end{align*}
This definition can be extended to infinite type- and action spaces by replacing the corresponding sums with integrals.
Observe that the interim utility is linear in $x_i$, the player's own action distribution.

\begin{definition}[Bayes-Nash Equilibrium (BNE)]
\label{def:bne}
A strategy profile is a \emph{Bayes-Nash Equilibrium} if it maximizes interim utility for almost all types.
More specifically, a \emph{distributional} profile $x^* = (x^*_1, \dots, x^*_n)$, where $x^*_i: \Theta_i \to \Delta(\mathcal{A}_i)$, is a BNE if for all $i \in \mathcal{N}$ and almost all $\theta_i \in \Theta_i$:
\begin{align*}
    u_i(x^*|\theta_i) \geq u_i(a_i^\prime, x_{-i}^*| \theta_i)
    \quad \forall a'_i \in \mathcal{A}_i.
\end{align*}
\end{definition}

There are multiple approaches to define Bayesian correlated equilibria (BCEs) \citep{forges_five_1993}. We will rely on the "partial Bayesian approach", which assumes that all players observe their types and a recommendation for a corresponding action. In contrast to the "strategic form" and "agentic form" approaches, recommendations are correlated on the level of individual actions and may depend on the realized types of \emph{all} players. For this, we need the following concepts.
Let $\sigma \in \Delta(\Theta \times \mathcal{A})$ denote a distribution over types and actions.
\begin{definition}[Prior Consistency]
    $\sigma$ satisfies \emph{prior consistency} with respect to a prior $\pi$ if the marginal distribution of types induced by $\sigma$ coincides with the exogenous prior $\pi$.
    \[
        \sum_{a \in \mathcal{A}} \sigma(\theta,a) =  \sigma(\theta)= \pi(\theta)
        \quad \forall \theta \in \Theta .
    \]
\end{definition}
\begin{definition}[Conditional Independence]
$\sigma$ satisfies \emph{conditional independence} if
\[
\sigma(\theta_{-i} \mid \theta_i, a_i) = \sigma(\theta_{-i} \mid \theta_i)
\quad \forall i .
\]
\end{definition}
This condition ensures that a player’s recommendation does not reveal information about opponents.

\begin{definition}[BCE and BCCE]\label{def:bce}
A distribution $\sigma \in \Delta(\Theta \times \Acal)$ satisfying prior consistency is

a \emph{Bayesian correlated equilibrium (BCE)} if, for all players $i$, types $\theta_i$, and actions $a_i, a_i'$,
    \begin{equation}
    \label{eq:bce-incentive-compatibility}
    \sum_{\theta_{-i}, a_{-i}}
    \sigma(\theta, a)
    \bigl[
        u_i(a_i, a_{-i}| \theta_i)
        -
        u_i(a_i', a_{-i}| \theta_i)
    \bigr]
    \ge 0 ;
    \end{equation}
    
a \emph{Bayesian coarse correlated equilibrium (BCCE)} if, for all players $i$, types $\theta_i$, and actions $a_i'$,
    \[
    \sum_{\theta_{-i}, a}
    \sigma(\theta, a)
    \bigl[
        u_i(a| \theta_i)
        -
        u_i(a_i', a_{-i}| \theta_i)
    \bigr]
    \ge 0 .
    \]
Optionally, one may impose conditional independence.
\end{definition}

The \emph{Incentive compatibility} constraint (\ref{eq:bce-incentive-compatibility}) for a BCE must hold for each type, and means that no player wants to deviate from their recommended action $a_i$ to another action $a_i'$ after observing their type $\theta_i$.
The hierarchy of the introduced equilibrium concepts is well-known: 
\[
    BNE \subseteq BCE \subseteq BCCE.
\]

Conditional independence is optional across equilibrium concepts. The inclusion hierarchy is preserved whether conditional independence is imposed for all equilibrium concepts or for none. We adopt the former as the default.

\paragraph{Single-Item, First-Price Auctions.}
\label{sec:auctions}
We focus on the standard symmetric Independent Private Values (IPV) model. The set of bidders is $\Ncal = \{1, \dots, n\}$. Each bidder $i$ draws a private valuation $\theta_i$ independently from a probability mass function (or density) $\pi$ with corresponding cumulative distribution function $F$ with support on $[0, \bar{\theta}]$. W.l.o.g. we can set $\bar \theta = 1$.
Bidders submit bids $a_i \in \Acal_i = [0,1]$. The restriction $\Acal_i = [0,1]$ is w.l.o.g. as rational bidders never overbid.

The auction is defined by an allocation rule $q_i(a)$. The utility of bidder $i$ with valuation $\theta_i$ is:
\begin{equation}
    u_i(a| \theta_i) = q_i(a) \cdot (\theta_i - a_i).
\end{equation}
We assume a standard highest-wins allocation rule with uniform tie-breaking:
\begin{equation}
    q_i(a) = \begin{cases} 
        1 & \text{if } a_i > \max_{j \neq i} a_j \\
        \frac{1}{|\{j : a_j = \max a\}|} & \text{if } a_i = \max_{j} a_j \\
        0 & \text{otherwise.}
    \end{cases}
\end{equation}

In the \emph{first-price auction}, the winner pays their bid.

Under regularity assumptions \citep{UniqueEquilibria18}, the first-price auction has a unique BNE $\beta^{\text{FP}}$ in the symmetric IPV model, which is given by the expectation of the second-highest valuation conditioned on the bidder having the highest valuation (i.e., the highest of the others):
\begin{equation}
    \beta_i^{\text{FP}}(\theta_i) = \mathbb{E}[Y_1 \mid Y_1 < \theta_i],
\end{equation}
where $Y_1$ is the highest order statistic of the $n-1$ other bidders' valuations. For example, if valuations are uniformly distributed ($\theta_i \sim U[0,1]$), this simplifies $\beta_i^{\text{FP}}(\theta_i) = \frac{n-1}{n}\theta_i$.

\paragraph{Equilibria in the Discretized First-Price Auction.}
To find equilibria in more general settings, where theoretical predictions are unavailable, learning algorithms discretize the game. However, this discretization alters the game already in the symmetric IPV model and breaks the existence of the unique pure symmetric BNE \citep{rasooly2022importancediscreteinaccuracycontinuous}. Consequently, learning algorithms are no longer expected to converge to a singleton but rather to a set of mixed strategies. 
\Cref{fig:Polytope_uniform} illustrates the set of possible BNEs and BCEs for an example.
We are interested in how the limit set of the learning dynamics, in particular the projected gradient dynamics, changes as the discretization becomes finer.

\section{Bayesian Semicoarse Correlated Equilibrium}

Most learning dynamics are not guaranteed to converge to BNEs but only to relaxed notions thereof. 
On the one hand, BCEs are a relatively tight superset of BNEs, but they lack convergence guarantees for gradient-type algorithms.
On the other hand, the set of BCCEs is too large in the case of First Price auctions to derive meaningful convergence statements \citep{ahunbay2024uniquenessbayesiancoarsecorrelated}.
Therefore, we need to move on to another, relaxed notion of equilibrium.
We propose the Bayesian Semicoarse Correlated Equilibrium (BSCCE) for this purpose, which lies between BCE and BCCE.

\subsection{Semicoarse Correlated Equilibrium (SCCE)}
SCCEs were introduced by \cite{ahunbay2025semicoarsecorrelatedequilibrialpbased} to capture the convergence set of gradient-based learners in complete-information games while maintaining computational feasibility. It can be understood as a specific instance of $\Phi$-equilibria \citep{GreenwaldJafari2003, StoltzLugosi2007} that includes CEs but is (sometimes) strictly smaller than the CCE polytope.
\citet{ahunbay2025semicoarsecorrelatedequilibrialpbased} already applied this solution concept to complete-information versions of Bertrand competition and first-price auctions.

Let us recall the definition of SCCEs and provide an intuitive explanation.
Gradient-based dynamics satisfy no-regret guarantees with respect to a restricted class of deviations that correspond to gradient fields of potential functions \citep{ahunbay_first-order_2025}.
By restricting to gradients of quadratic functions, we obtain a solution concept that is representable as an LP and thus computationally tractable, and it is a superset of the outcomes of gradient dynamics.

A probability distribution $\sigma \in \Delta(\mathcal{A})$ is an SCCE if it is stable against infinitesimal deviations in the direction $Q x_i + q_i$
defined by a pair $(Q_i, q_i)$ for each player $i$, where
\begin{itemize}
    \item $Q_i \in \mathbb{R}^{|\Acal_i| \times |\Acal_i|}$ is a symmetric matrix defining probability flow between actions, and 
    \item $q_i \in \mathbb{R}^{|\Acal_i|}$ is a vector providing an offset.
\end{itemize}
Since the utility function $u_i$ is linear in player~$i$'s action, restricting attention to infinitesimally small deviations entails no loss of generality, as any local deviation can be extended globally by linearity.

These deviations capture the form of update directions that arise under gradient steps which are projected back into the simplex.
Furthermore, to avoid projections in the analysis and to tailor the solution concept to mixed extensions of normal-form games, \cite{ahunbay2025semicoarsecorrelatedequilibrialpbased} restrict the $(Q_i, q_i)$ according to three conditions:

1. \textit{Tangency:} $\sum_{a'_i \in \Acal_i} (Q_i(a'_i, a_i) + q_i(a'_i)) = 0$ for all $a_i$. This ensures probability is merely reallocated between actions, not created or destroyed.
    
2. \textit{Non-negativity:}
$Q_i(a'_i, a_i) + q_i(a'_i) \ge 0$ for all $a'_i \neq a_i$.
This condition ensures that, at pure strategies, admissible deviation directions lie within the tangent cone of the probability simplex, so that infinitesimal deviations preserve feasibility.
    
3. \textit{Symmetry:} 
$Q_i = Q_i^\top$.
This ensures that deviations are gradient fields of quadratic functions. Projected gradient ascent can only move along gradients of scalar functions. Without symmetry, the deviation direction of $Q_i x_i + q_i$ can exhibit rotational flow, which is impossible for gradient fields.

The incentive constraints are infinitesimal deviations in the direction $Q_i x_i + q_i$ evaluated for pure strategies (i.e., $x_i$ is a canonical basis vector).
Thus, the pairs $(Q_i, q_i)$ define deviations $d_i(a', a) = Q_i(a', a) + q_i(a')$ on actions $a', a \in \mathcal{A}_i$. 
We denote the \textit{set of admissible deviations}, i.e., where $d_i$ satisfies the conditions above, by $\mathcal{D}$.

In summary, we get the following definition of SCCEs.
\begin{definition}[Semicoarse Correlated Equilibrium (SCCE), \cite{ahunbay2025semicoarsecorrelatedequilibrialpbased}]
    A distribution $\sigma$ on the joint action profiles is a \emph{semicoarse correlated equilibrium} if, for every agent $i$ and every $d \in \mathcal{D}$, the following incentive compatibility constraint holds:
    \begin{equation*}
        \sum_{a \in A} \sigma(a) \left[ \sum_{a'_i \in \Acal_i} d(a_i', a_i) u_i(a'_i, a_{-i}) \right] \le 0.
    \end{equation*}
\end{definition}
Although there are uncountably many incentive compatibility constraints as the set of pairs $(Q_i, q_i)$ forms a polyhedral cone, this cone admits a finite representation via its extreme rays.
This yields a finite set of constraints, as detailed in \cref{app:LPs}. 
Moreover, the resulting linear program has size polynomial in the cardinalities of the type and action spaces and thus is computationally tractable.

\subsection{Extension to Bayesian Games}
For the extension of SCCEs to Bayesian games, we adopt the \emph{partial Bayesian} framework in line with the above definition of BCEs. This choice is motivated by the structure of learning dynamics in auction games: gradient-based algorithms operate on \emph{type-contingent mixed actions} $x_{i,t}(\cdot \mid \theta_i)$ and update them using local, interim utility gradients. They do not optimize over or maintain distributions over complete strategy functions $s_i : \Theta_i \to \mathcal{A}_i$.
The partial Bayesian formulation mirrors this behavior by defining equilibrium objects directly as joint distributions over types and actions. 

A Bayesian Semicoarse Correlated Equilibrium (BSCCE) is therefore defined as a joint distribution $\sigma \in \Delta(\Theta \times \mathcal{A})$. The definition inherits the structural conditions of Bayesian correlated equilibrium (Definition~\ref{def:bce}), such as prior consistency and conditional independence, but replaces the standard incentive constraints with semicoarse incentive constraints tailored to gradient-based deviations.

We build on the SCCE framework and allow for admissible deviations that are \emph{type-specific}: for each player $i$ and type $\theta_i \in \Theta_i$, deviations are generated by a pair $(Q_i^{\theta_i}, q_i^{\theta_i})$ defining a tangential, probability-preserving transformation of the action simplex. For notational simplicity, we denote such deviations by $d^{\theta_i}(a_i', a_i) = Q_i^{\theta_i}(a_i', a_i) + q_i^{\theta_i}(a_i')$.

\begin{definition}[Bayesian Semicoarse Correlated Equilibrium (BSCCE)]
\label{def:bscce}
A \emph{Bayesian Semicoarse Correlated Equilibrium (BSCCE)} is a joint probability
distribution $\sigma \in \Delta(\Theta \times \mathcal{A})$ that satisfies:

1. \emph{Prior consistency.}

2. \emph{Conditional independence.}

3. \emph{Semicoarse incentive compatibility.}
No player can gain by admissible deviations. Formally, we require that, for every player $i$, every type $\theta_i \in \Theta_i$, and every deviation $d^{\theta_i} \in \mathcal{D}$
\begin{equation*}
\tiny
    \expec_{(\theta_{-i}, a) \sim \sigma}
    \Biggl[
        \sum_{a'_i \in \Acal_i}
        d^{\theta_i}(a'_i,a_i)\,
        u_i(a'_i,a_{-i} \mid \theta_i) \mid
        \theta_i
    \Biggr]
    \le 0.
\end{equation*}
\end{definition}
The conditioning of expectations affects the player's beliefs about other players and their actions.

The BSCCE constraints are all linear. We will add a linear objective that measures the distance to a given distribution (the BNE strategy) and obtain a linear program, which we provide in \Cref{app:LPs}.

\paragraph{Relation to other equilibrium concepts.}
Equilibrium notions can be ordered by the class of admissible deviations they allow. Allowing a richer class of deviations induces more incentive constraints and therefore yields a smaller equilibrium set.
The Bayes-Nash equilibrium (BNE) is the most restrictive notion, as it allows all unilateral deviations at the level of complete strategies and requires that the actions be sampled independently for each player. Bayesian correlated equilibrium (BCE) relaxes this requirement by allowing players to condition deviations on correlated action recommendations, while still enforcing obedience to all such deviations.
BSCCE further relaxes the equilibrium notion by restricting attention to a \emph{proper subset} of deviations. 
Bayesian coarse correlated equilibrium (BCCE) imposes the weakest incentive constraints by allowing only unconditional deviations, and therefore yields the largest equilibrium set. Accordingly, the equilibrium concepts satisfy the inclusion chain
\begin{equation}
\label{eq:subset_relations}
    \text{BNE} \subseteq \text{BCE} \subseteq \text{BSCCE} \subseteq \text{BCCE}.
\end{equation}

\section{Bayesian Gradient Ascent Learners}

To relate BSCCE to learning dynamics, we study gradient-based algorithms in discretized Bayesian games. Our goal is not to model all possible feedback mechanisms, but to identify a class of learning dynamics whose limiting behavior can be characterized by semicoarse incentive constraints.

\paragraph{Expected-Utility Formulation.}
\label{sec:expected_utility_formulation}

A key modeling choice is how learning agents process private information. Rather than modeling stochastic type realizations and noisy gradient feedback, we adopt an \emph{expected-utility formulation} that treats the Bayesian game as a deterministic, high-dimensional game over type-contingent mixed actions.
Specifically, each player $i$ maintains a mixed action $x_i(\cdot \mid \theta_i) \in \Delta(\Acal_i)$ for every type $\theta_i \in \Theta_i$. The player’s decision variable is therefore the vector
\[
x_i \;\in\; X_i := \prod_{\theta_i \in \Theta_i} \Delta(\Acal_i),
\]
and payoffs are given by the ex-ante expected utility
\[
u_i(x) \;=\; \mathbb E_{\theta \sim \pi} \mathbb E_{a \sim x(\cdot \mid \theta)}[u_i(a| \theta_i)].
\]
Under this formulation, the Bayesian game can be viewed as a complete-information normal-form game with a structured action space $X_i$. The utility functions are smooth and multilinear in the mixed actions, and gradients with respect to $x_i(\cdot\mid\theta_i)$ are well defined.

\paragraph{Bayesian Projected Gradient Ascent.}

We consider players who apply projected gradient ascent to maximize their ex-ante expected utility. Given a step size sequence $(\eta_t)_{t\ge 1}$, the update rule for player $i$ is
\begin{equation}
x_{i,t+1}
=
\Pi_{X_i}
\bigl[
x_{i,t} + \eta_t \nabla_{x_i} u_i(x_{1,t}, \dots, x_{n,t})
\bigr],
\end{equation}
where $\Pi_{X_i}$ denotes Euclidean projection.
Because the feasible set $X_i$ is a Cartesian product of simplices, the projection decomposes across types.
We assume that step-sizes satisfy $\eta_t = \omega(1/t)$ and $\sum_{t=1}^T \eta_t = o(T)$.\footnote{Both requirements are in Bachmann-Landau notation.}

This deterministic gradient dynamic avoids stochasticity from sampled type realizations and allows us to directly apply no-regret and stability arguments developed for gradient-based learning in complete-information games.

\paragraph{Connection to Semicoarse Equilibria.}
The expected-utility view of our Bayesian game allows us to extend the convergence result of \citet[Proposition~2.1]{ahunbay2025semicoarsecorrelatedequilibrialpbased}, which shows that projected gradient ascent converges to the set of SCCEs in time-average under complete information, to our incomplete-information setting. 



\begin{corollary}[BPGA implies BSCCE for time-average limit points]
\label{prop:bpga_implies_bscce}
Consider a finite Bayesian game with finite type sets $\Theta_i$ and finite action sets $\Acal_i$.
Let each player $i$ control a type-contingent mixed action vector $x_i \in X_i$ and let the (ex-ante) expected utility be $u_i(x)$. 
Let $(x_t)_{t\ge 1}$ be generated by Bayesian projected gradient ascent (BPGA). 
Define the time-average distribution on $\Theta \times A$ by
\begin{equation}
\label{eq:sigma_T_def}
\sigma_T(\theta,a)
:= \frac{1}{T}\sum_{t=1}^T \pi(\theta)\prod_{j\in\Ncal} x_{j,t}(a_j \mid \theta_j).
\end{equation}
Then every limit point $\sigma$ of $(\sigma_T)_{T\ge 1}$ is a BSCCE.
\end{corollary}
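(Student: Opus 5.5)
The proof reduces the statement to the complete-information result of \citet[Proposition~2.1]{ahunbay2025semicoarsecorrelatedequilibrialpbased} via the expected-utility formulation. We view the discretized Bayesian game as a normal-form game, or more precisely a multilinear game over products of simplices, in which player $i$ controls the block vector $x_i=(x_i(\cdot\mid\theta_i))_{\theta_i\in\Theta_i}\in X_i$. Because $X_i$ is a Cartesian product of simplices, the Euclidean projection $\Pi_{X_i}$ splits into independent projections onto $\Delta(\Acal_i)$, one for each type. Moreover,
\[
\nabla_{x_i(\cdot\mid\theta_i)} u_i(x)=\pi_i(\theta_i)\,\bigl(u_i(a_i,x_{-i}\mid\theta_i)\bigr)_{a_i\in\Acal_i},
\]
where $\pi_i$ is the marginal. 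So BPGA is exactly per-type projected gradient ascent on the interim utility, with the effective step size $\eta_t\pi_i(\theta_i)$. The effective step sizes still satisfy the assumptions $\eta_t=\omega(1/t)$ and $\sum_t\eta_t=o(T)$, because $\pi_i(\theta_i)>0$ is a constant. Types with zero marginal are irrelevant, since their constraints in Definition~\ref{def:bscce} are vacuous.

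The key step is the regret bound. Fix $i$, a type $\theta_i$, and an admissible deviation $d^{\theta_i}=(Q,q)$. Following the argument behind Proposition~2.1, consider the quadratic potential $\phi(x)=\tfrac12 x^\top Qx+q^\top x$ on $\Delta(\Acal_i)$. The first-order expansion of $\langle x_{t+1}-x_t, Qx_t+q\rangle$ against the projected gradient step gives a telescoping estimate
\[
\sum_{t=1}^T \eta_t\,\bigl\langle Qx_t+q,\;\nabla_t\bigr\rangle \le \phi(x_{T+1})-\phi(x_1)+O\Bigl(\sum_t\eta_t^2\Bigr)+\text{boundary terms}.
\]
Tangency and non-negativity ensure that the projection never reduces the deviation's first-order gain, so the boundary terms are nonpositive. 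Symmetry is what makes $Qx+q$ a gradient, which is why the left-hand side telescopes. I would import this step directly from the complete-information proof applied to player $i$'s type-$\theta_i$ block, treating the opponents' and the other types' play as an arbitrary adversarial sequence. This is legitimate because the argument in Proposition~2.1 only needs the linear-in-own-action payoff $\langle\nabla_t,x\rangle$. Dividing by $T$ and using the step-size conditions then gives
\[
\frac1T\sum_{t=1}^T\sum_{a_i,a'_i}x_{i,t}(a_i\mid\theta_i)\,d(a'_i,a_i)\,u_i(a'_i,x_{-i,t}\mid\theta_i)\le o(1).
\]

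The remaining step is to rewrite this as the constraint on $\sigma_T$. Using independence of the prior, the product form in \eqref{eq:sigma_T_def}, and multiplication by $\pi_i(\theta_i)$, the left-hand side equals
\[
\sum_{\theta_{-i},a}\sigma_T(\theta,a)\sum_{a'_i}d(a'_i,a_i)\,u_i(a'_i,a_{-i}\mid\theta_i).
\]
This holds because $\sigma_T(\theta_{-i},a_{-i}\mid\theta_i,a_i)$ averages the products appropriately. Linearity in $\sigma$ is the crucial point: the averaging over $t$ commutes with the expectation because each summand is linear in $\sigma_T$. Prior consistency holds exactly for every $T$. Conditional independence also holds for every $T$, since each $\sigma_t$ factorizes as $\pi(\theta)\prod_j x_j$ and $\sigma_T(\theta,a_i)=\pi(\theta)\bar x$-type averages keep $\theta_{-i}$ independent of $a_i$ given $\theta_i$. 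Both conditions are linear equalities, or bilinear in the case of conditional independence but closed, so they pass to limit points by compactness of $\Delta(\Theta\times\Acal)$. The incentive inequalities pass to the limit as $\le 0$ because the $o(1)$ term vanishes.

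The main obstacle is the regret bound in the second paragraph, in particular checking that the bound is uniform over the cone of admissible deviations. I would handle this by normalizing and using the finitely many extreme rays mentioned in \cref{app:LPs}. I would also need to confirm that the projection argument of \citet{ahunbay2025semicoarsecorrelatedequilibrialpbased} survives the type-dependent step-size scaling $\pi_i(\theta_i)\eta_t$. If it does not, I would prove the telescoping inequality directly per block.
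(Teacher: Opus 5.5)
Your proposal is correct and follows essentially the paper's route: you check prior consistency and conditional independence of $\sigma_T$ directly, and you obtain the semicoarse constraints by applying Proposition~2.1 of \citet{ahunbay2025semicoarsecorrelatedequilibrialpbased} to the expected-utility game, where your per-block effective step size $\pi_i(\theta_i)\eta_t$ plays the same role as the paper's $1/\pi(\theta_i)$ weights in the potential $h$. The one slip is in your sketch of that bound: dividing the $\eta_t$-weighted telescoping sum by $T$ does not bound the unweighted average that defines $\sigma_T$, so you must instead divide each one-step inequality by $\eta_t$ and sum by parts with non-increasing step sizes, which yields the $1/(T\eta_T)$ term that $\eta_t=\omega(1/t)$ controls and is exactly the form Proposition~2.1 gives if you cite its conclusion directly (also, you do not need an independent prior, since $\pi(\theta_i)\pi(\theta_{-i}\mid\theta_i)=\pi(\theta)$ holds for any prior).
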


We provide a proof in \cref{sec:proof_bgpa_implies_bscce3}. This establishes BSCCE as a natural equilibrium concept for characterizing the long-run behavior of gradient-based learners in discretized Bayesian games.

\paragraph{Stochastic learning and sampled feedback.}
The BPGA dynamics studied above should be understood as an analytical proxy for the stochastic learning algorithms commonly used in practice. 
In typical auction-learning implementations, agents observe batched realized types and payoffs and update their mixed strategies using stochastic gradient estimates rather than exact interim gradients. 
Under standard assumptions, such stochastic updates track the corresponding expected-utility gradient dynamics in the limit
\citep{Benveniste-Stoch-Approx-12}.
This suggests that BSCCE remains the correct limiting object for stochastic gradient dynamics. However, we do not claim that the stochastic case is fully resolved here. A full stochastic-gradient analysis of auction games is an important next step, but it is beyond the scope of this paper.

\section{Convergence Analysis}
\label{sec:convergence_analysis}

In this section, we characterize the size of the Equilibrium sets for first-price auctions with two players in the IPV model. 
Since we study discretized auctions which have an ordered action space, we are not only interested in how much probability can be shifted between two feasible equilibria, but also how far this probability can be shifted.


\paragraph{Metrics.}
Given a possibly mixed strategy $\sigma \in \Delta(\Theta \times \mathcal A)$ and a pure strategy $\beta^*: \Theta \to \mathcal{A}$,
their squared Wasserstein-2 distance is
\[
W_2^2(\sigma,\beta^*)
= \frac{1}{|\mathcal{N}|}\sum_{\theta, a} \sigma(\theta, a) ||a - \beta^*(\theta)||_2^2 .
\]
In the following, we will simply refer to this metric as the Wasserstein distance.
We consider the squared Wasserstein-2 distance between the buyers’ marginal bid distributions and the equilibrium as a suitable objective for our analysis:
This metric captures deviations in the probability mass assigned to different type--action pairs and weighs these based on the distance to the canonical equilibrium $\beta^*$. 
Furthermore, it provides a natural upper bound on the size of equilibrium polytopes $(\mathcal{P})$:
\[
\diameter_\mathcal{E}(\mathcal{P}) := \sup_{\sigma_1, \sigma_2 \in \mathcal{P}} \sqrt{W_2^2(\beta_1,\beta_2)} \leq 2 \sup_{\sigma \in \mathcal{P}} \sqrt{W_2^2(\sigma, \beta^*)}.
\]

As shown by \citet{ahunbay2024uniquenessbayesiancoarsecorrelated}, the BCCE in first-price auctions is unique whenever the prior cumulative distribution function (CDF) is strictly concave. As $\mathrm{BSCCE} \subseteq \mathrm{BCCE}$, their contraction result immediately carries over to BSCCE. Thus, the theoretically interesting case for our analysis is when the conditions guaranteeing BCCE contraction are violated.

Their numerical experiments indicate that even under a uniform prior, which does not satisfy strict concavity, the BCCE that maximizes Wasserstein distance to the canonical equilibrium approaches that equilibrium as the discretization is refined. At the same time, they provide counterexamples in which this distance increases when the prior is no longer concave. In particular, such behavior arises for $\mathrm{Beta}(\alpha,1)$ priors, whose CDFs are strictly convex for $\alpha > 1$.

These findings motivate a closer investigation of whether BSCCE provides a sharper prediction in precisely these regimes. Specifically, we examine whether the Wasserstein distance between BSCCEs and the canonical equilibrium decreases as the discretization becomes finer even when the prior CDF is convex and BCCE need not concentrate.



\paragraph{Discretization.}
We discretize the type and action space uniformly as $\left\{0, 1/\xi, \dots, 1\right\}$, with a size of $\xi+1$.
The probability of a discrete type $\theta_i =\frac{k}{\xi}$ is $F(\frac{k+1}{\xi+1}) - F(\frac{k}{\xi+1})$.
For a $\mathrm{Beta}(\alpha,1)$ prior, the resulting type probabilities become increasingly imbalanced as $\alpha$ grows. In particular, the prior mass concentrates near the highest valuation, leading to $\pi_i(1) \gg \pi_i(0)$ for large $\alpha$. 
Similarly, the probability mass concentrates around $\frac{1}{2}$ for a $\mathrm{Beta}(\alpha,\alpha)$ prior.
This imbalance induces numerical instabilities in equilibrium computation and makes experiments for larger discretizations challenging.


\noindent
\begin{minipage}[htb]{0.55\textwidth}
\paragraph{Results. }
As the discretization refines, the BSCCE polytope contracts. In particular, its diameter, bounded by the maximal Wasserstein distance to the canonical equilibrium, converges to zero, indicating that all BSCCEs approach the unique continuous BNE. This contraction implies that any equilibrium selected by gradient-based learning dynamics in discretized auctions must lie close to the canonical equilibrium, providing strong numerical evidence for convergence in the continuous limit.
The decrease in squared Wasserstein distance is illustrated in \cref{fig:equilibrium-discretization-beta21} for a $\mathrm{Beta}(2,1)$ prior. 
In this case, the Wasserstein distance decreases roughly as $\frac{1}{10 \xi}$.
\end{minipage}
\hfill
\begin{minipage}[htb]{0.42\textwidth}
    \centering
    \includegraphics[width=\linewidth]{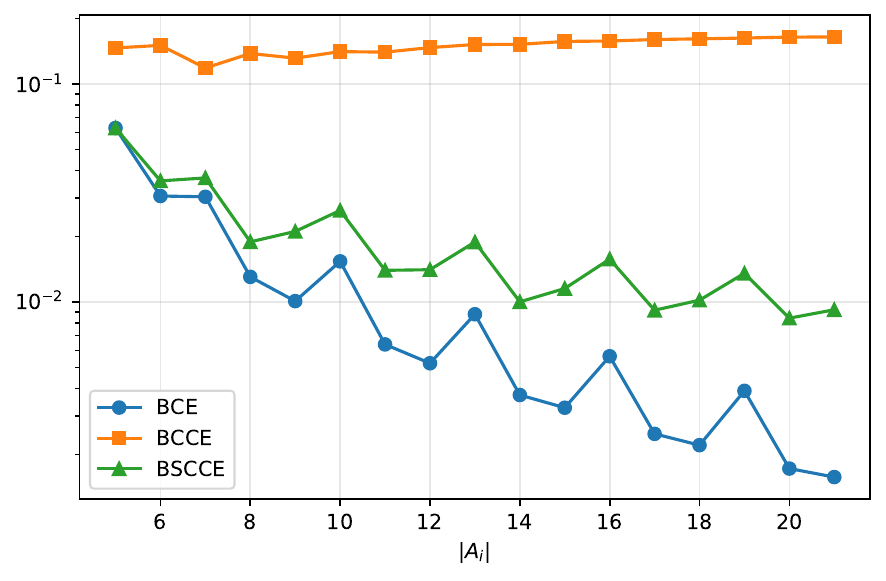}
    \captionof{figure}{Distance to the BNE depending on $\xi$ for a $\mathrm{Beta}(2,1)$ prior.}
    \label{fig:equilibrium-discretization-beta21}
\end{minipage}

\Cref{fig:Polytope_uniform} from our introduction visualizes the maximal conditional probability assigned to each type--action pair under different solution concepts for this prior. Specifically, each tile $(\theta_i, a_i)$ displays $\max_{\sigma \in \mathcal{P}} \sigma(a_i \mid \theta_i)$.
The figures illustrate the inclusion relationship $BNE \subseteq BCE \subseteq BSCCE \subseteq BCCE$. While all solution concepts are centered around the canonical equilibrium $\beta_i^*(\theta_i) = \tfrac{2}{3}\theta_i$, both BSCCE and BCE are substantially more concentrated around $\beta_i^*$ than BCCE. 

The speed of the concentration depends on the priors (see \cref{app:numerics}). When the game is close to a complete-information game, or the CDF is too convex, the size of the BSCCE polytope shrinks slowly. 
We observe that the BSCCE is more robust to the non-concavity of the CDF than the BCCE, but for large values of $\alpha$ (where the convexity strengthens), it is not able to track the BCE convergence.



\paragraph{Complexity.} An implementation of the general BSCCE linear program for a discretization of $1/\xi$ involves $O(\xi^4)$ variables and $O(\xi^3)$ constraints, with a relatively dense constraint matrix. Several model reductions are available when players are symmetric, and additional reductions apply to first-price auctions, where rational players never bid above their type. This allows us to reduce the number of non-zeros in the LP by approximately an order of magnitude. Even after these optimizations and despite the polynomial scaling, the LPs become challenging to solve for very fine discretizations.

For a discretization of $\xi=20$, the BSCCE LP contains approximately 45k variables, 8k constraints, one million non-zeros, and requires about 20 seconds to compute the Wasserstein-distance-maximizing BSCCE using the Gurobi Optimizer version 12.1. By comparison, without model reductions, the same instance requires 30 minutes. For $\xi = 30$, the BSCCE LP grows to roughly 180k variables, 25k constraints, and 7 million non-zeros, with a solution time of about 11 minutes.

While we can ensure a theoretically well-behaved complexity bound, this does not always imply practical tractability.
Due to the increasingly numerical instability and increasing computational burden, we did not pursue even finer discretizations.
We still believe that our study here is informative, as it points towards a theoretically justified convergence of gradient-based algorithms.

\section{Limitations}
Our results have some limitations. First, while BSCCE is defined for general finite Bayesian games, our numerical analysis focuses on discretized symmetric first-price auctions. In particular, the contraction of BSCCE toward the continuous-game BNE outside regimes already covered by existing BCCE results is supported numerically rather than established theoretically. When existing results imply that BCCE contracts to the BNE, the inclusion $\mathrm{BCE}\subseteq\mathrm{BSCCE}\subseteq\mathrm{BCCE}$ immediately yields the same conclusion for BSCCE. Beyond these regimes, however, deriving conditions under which BSCCE contracts to the BNE remains an open theoretical question.

Second, our formal learning result concerns time averages of deterministic Bayesian projected gradient ascent using expected-utility gradients. It therefore does not establish convergence guarantees for stochastic gradient updates based on sampled auction outcomes. While stochastic-approximation arguments motivate a connection between these settings, a formal treatment of stochastic BPGA requires additional assumptions and is beyond the scope of this paper.

Third, our numerical evidence is necessarily limited by the computational cost of optimizing over the BSCCE polytope. This constrains the discretization sizes that can be studied. Moreover, uniform discretizations can generate grid-alignment effects because the continuous BNE may lie closer to the action grid for some grid sizes than for others. We therefore interpret individual finite-grid results with caution and focus on the behavior across increasingly fine discretizations.

Finally, the auction experiments considered here use symmetric priors. Extending the analysis to asymmetric priors is nontrivial because the corresponding equilibrium is asymmetric and generally does not admit the same closed-form benchmark. Similarly, although the definition of BSCCE and its connection to BPGA apply to finite Bayesian games more generally, whether BSCCE provides a comparably sharp equilibrium prediction in other Bayesian games remains an open empirical and theoretical question.

\section{Conclusion}

This paper revisits a recurring empirical observation in auction learning: gradient-based algorithms applied to discretized Bayesian auctions reliably converge to outcomes that are close to the unique Bayes–Nash equilibrium (BNE) of the underlying continuous game, even in settings where standard theoretical guarantees fail. While existing no-regret arguments based on Bayesian coarse correlated equilibrium (BCCE) explain convergence in all-pay auctions, they are insufficient for first-price auctions, where the BCCE set can remain large and include outcomes far from the BNE.

To address this gap, we introduced the \emph{Bayesian semicoarse correlated equilibrium} (BSCCE), a refinement of BCCE that restricts admissible deviations to tangential, probability-preserving transformations consistent with projected gradient dynamics. This equilibrium concept admits a linear programming characterization that enables systematic numerical analysis.
While the focus of this paper is first-price auctions with independent private values, the solution concept BSCCE is defined for all Bayesian games.

Our main finding is that, in symmetric first-price auctions, the BSCCE set contracts as the discretization is refined. Across a range of priors, including cases where BCCE fails to concentrate, the diameter of the BSCCE polytope shrinks to zero, indicating convergence to the unique continuous BNE. This shrinkage property provides a stability certificate for gradient-based learning dynamics. 
However, shrinkage is limited when the priors are degenerate, i.e., when the game is close to a complete-information game, or when the CDF is highly convex. While the proposed equilibrium concept, BSCCE, sharpens predictions about the outcome of gradient dynamics, it remains open whether there is an even smaller, tractable solution concept that captures that outcome better.

Overall, our results suggest that equilibrium concepts tailored to the structure of learning dynamics can offer sharper and more robust predictions than generic no-regret notions in Bayesian games. 
The BSCCE framework provides a principled bridge between gradient-based optimization and equilibrium analysis, and it is applicable beyond auctions to other multi-agent learning problems with private information and high-dimensional strategy spaces.

\vfill
\pagebreak

\begin{ack}
This project has received funding from the European Research Council (ERC) under the European Union’s Horizon Europe research and innovation programme (grant agreement No 101198689). 

This project was funded by the Deutsche Forschungsgemeinschaft (DFG, German Research Foundation) - GRK 2201/2 - Project Number 277991500.

The work of Mete Şeref Ahunbay was funded by a Walter Benjamin Fellowship from the German Research Foundation (Deutsche Forschungsgemeinschaft), project nr. 558988238.

This project has received funding from the French National Research Agency (ANR) through the PEPR IA FOUNDRY project (ANR-23-PEIA-0003).
\end{ack}


{
\small
\bibliography{bibliography}
\bibliographystyle{icml2026}
}


\newpage
\appendix

\section{LP Formulations}\label{app:LPs}
\subsection{BSCCE LP}
We state BSCCEs as a linear program over the joint distribution $\sigma \in \Delta(\Theta \times \mathcal{A})$ where the constraints enforce the conditions in Definition \ref{def:bscce} (prior consistency, conditional independence, incentive compatibility). Note that $\pi(\theta)$ represents the probability density of the type distribution.

For the conditional independence, we write it explicitly in terms of the joint distribution and already assume prior consistency 
\[
 \sigma(\theta_{-i}|\theta_i, a_i) = \sigma(\theta_{-i}|\theta_i) 
 \Leftrightarrow 
 \frac{\sigma(\theta, a_i)}{\sigma(\theta_i, a_i)} = \frac{\sigma(\theta)}{\sigma(\theta_i)} = \frac{\pi(\theta)}{\pi(\theta_i)} 
 \Leftrightarrow 
 \pi(\theta_i) \sigma(\theta, a_i) = \pi(\theta)\sigma(\theta_i, a_i)
\]

Besides Prior Consistency
\[
    \sum_{a} \sigma(\theta, a) = \pi(\theta)
\]
it must hold for each player $i \in \mathcal{N}$ and all corresponding types $\theta_i\in \Theta_i$ that
\begin{align*}
    \pi(\theta_i) \sum_{a_{-i}}\sigma(\theta, a_i, a_{-i}) &= \pi(\theta) \sum_{\theta_{-i}', a_{-i}}\sigma(\theta_i, \theta_{-i}', a_i, a_{-i}) \quad \forall \; a_i, \theta_{-i}, \tag{Conditional Independence}\\
    \sum_{\theta_{-i}, a } \sigma(\theta, a)
  \left[
     \sum_{a_i'\in \Acal_i} d_i(\theta_i, a_i',a_i) u_i(a_i',a_{-i}| \theta_i)
  \right] &
  \ \le\ 0
  \tag{Semicoarse incentive compatibility}
\\ \forall\, d_i(\theta_i, a_i',a_i) \in \mathcal D_i. 
\end{align*}

The suitable BSCCE deviations $d_i(\theta_i, a_i', a_i) = Q_i(\theta_i, a_i', a_i) + q_i(\theta_i, a_i')\in \mathcal D_i$ are defined as follows:\\
for all players $i\in \mathcal{N}$ and for all corresponding types $\theta_i\in \Theta_i$ it must hold
\begin{align*}
    \sum_{a_i' \in \Acal_i} d_i(\theta_i, a_i', a_i) = 0                    &&\forall a_i ,
    \tag{probability preservation, $\mu_i(\theta_i, a_i)$}\\
     d_i(\theta_i, a_i', a_i) \geq 0                                    &&\forall a_i \neq a_i',
     \tag{non-negativity, $\gamma_i(a_i, a_i^\prime)$}\\
    Q_i(\theta_i, a_i', a_i) - Q_i(\theta_i, a_i, a_i') =0              &&\forall a_i, a_i'.  
    \tag{symmetry, $\rho_i(a_i, a_i^\prime)$}
\end{align*}

\subsection{Tractable BSCCE LP}
\label{app:tractable_BSCCE_LP}

As there are infinitely many 'suitable' deviations, the primal before is numerically unsolvable. We use the approach from Theorem 3.4 in \cite{ahunbay2025semicoarsecorrelatedequilibrialpbased} to get a tractable version.
The idea is to consider the 'worst-case' deviations, which correspond to the extreme rays of the deviation polyhedral cone,
and incorporate them into the objective. This results in an inner and outer problem. Then we consider the dual of the inner maximization problem, with the dual variables $\mu, \rho$ and $\gamma \geq 0$ as introduced before.
The dual is a feasibility problem as the right-hand sides are $0$, and the constraints can be incorporated into the outer problem.
As $\mu_i(\theta_i, a_i)$ will only appear in one equality constraint, it can be eliminated from the resulting LP by fixing its value suitably.
The Semicoarse incentive compatibility constraint is then replaced by the following two types of tractable constraints (for each player and type)
\begin{align*}
    w_i(\theta_i, a_i, a_i^\prime) - w_i(\theta_i, a_i, a_i) + \gamma_i(\theta_i, a_i, a_i^\prime) + \rho_i(\theta_i, a_i, a_i^\prime) - \rho_i(\theta_i, a_i^\prime, a_i) &= 0
   &&\forall a_i\ne a_i^\prime\\
   \sum_{a_i}[ w_i(\theta_i, a_i, a_i^\prime) - w_i(\theta_i, a_i, a_i)] + \sum_{a_i\ne a_i^\prime}\gamma_i(\theta_i, a_i, a_i^\prime) &= 0
   &&\forall a_i^\prime,
\end{align*}
where
\[
w_i(\theta_i, a_i, a_i^\prime) = \sum_{\theta_{-i}, a_{-i}}\sigma(\theta, a) u_i(a_i',a_{-i}| \theta_i).
\]
Note that if $\rho=0$, then the constraints reduce to the BCE incentive constraints.

If all players have identical type and action spaces, i.e.\ $|\Theta_i| = |\Theta_j|$ and
$|\Acal_i| = |\Acal_j|$ for all $i\neq j \in \Ncal$, the resulting LP has
\(
O\!\left((|\Theta_i|\,|\Acal_i|)^{|\Ncal|}\right)
\)
variables and
\(
O\!\left(|\Ncal|\,|\Theta_i|^{|\Ncal|}\,|\Acal_i| + |\Ncal|\,|\Theta_i|\,|\Acal_i|^2\right)
\)
constraints, where the first term comes from the prior-consistency and conditional-independence
constraints, and the second term from the tractable semicoarse incentive-compatibility
constraints.
\section{Proof of \cref{prop:bpga_implies_bscce}}
\label{sec:proof_bgpa_implies_bscce3}

\begin{proof}
  
    We show that the distribution
    \[
    \sigma_T(\theta,a)
    := \frac{1}{T}\sum_{t=1}^T \pi(\theta)\prod_{j\in\mathcal N} x_{j,t}(a_j \mid \theta_j)
    \]
    satisfies the constraints in Definition~\ref{def:bscce}.
    
    \paragraph{Prior consistency.}
    We must show that the marginal of $\sigma_T$ over actions equals the prior $\pi$:
    \(
    \sigma_T(\theta) = \pi(\theta)
    \).
    It holds that,
    \[
    \begin{aligned}
    \sigma_T(\theta) =\sum_{a\in A} \sigma_T(\theta,a)
    &= \sum_{a\in A} \frac{1}{T}\sum_{t=1}^T \pi(\theta)\prod_{j\in\mathcal N} x_{j,t}(a_j \mid \theta_j) \\
    &= \frac{1}{T}\sum_{t=1}^T \pi(\theta)
        \sum_{a\in A} \prod_{j\in\mathcal N} x_{j,t}(a_j \mid \theta_j) = \frac{1}{T}\sum_{t=1}^T \pi(\theta)
    = \pi(\theta).
    \end{aligned}
    \]
    The second-to last step follows, as $x_{j,t}(\cdot\mid\theta_j)$ are a probability
    distribution over $A$, hence the inner sum equals $1$.
    
    


    \paragraph{Conditional independence.}
    We first show that $\sigma_T(a_i \mid \theta_i) = \frac{1}{T} \sum_{t = 1}^{T} x_{i,t}(a_i \mid \theta_i) \quad (*)$.\\
    As a first step, we use Bayes' rule and the definition of $\sigma_T$. Afterwards, we rearrange terms to obtain
    \begin{equation*}
    \tiny
        \begin{aligned}
            \sigma_T(a_i \mid \theta_i) 
            &=  \sum_{a_{-i}, \theta_{-i}} \frac{\sigma_T(a, \theta)}{\sigma_T(\theta_i)}
            &&= \frac{1}{\sigma_T(\theta_i)} \cdot \sum_{a_{-i}, \theta_{-i}} \frac{1}{T} \sum_{t = 1}^{T} \pi(\theta) \prod_{j \in \Ncal} x_{j,t}(a_j \mid \theta_j) \\
            &&&= \underbrace{\frac{\pi(\theta_i)}{\sigma_T(\theta_i)}}_{=1} \cdot \frac{1}{T} \sum_{t = 1}^{T} x_{i,t}(a_i \mid \theta_i) 
            \sum_{\theta_{-i}} \pi(\theta_{-i} \mid \theta_i) \sum_{a_{-i},}\prod_{j \neq i} x_{j,t}(a_j \mid \theta_j)
            &=  \frac{1}{T} \sum_{t = 1}^{T} x_{i,t}(a_i \mid \theta_i).
        \end{aligned}
    \end{equation*}
    The last step is a result of prior consistency and the insight that $\pi(\theta_{-i} \mid \theta_i)$ is a distribution over $\Theta_{-i}$ and $\prod_{j\neq i} x_j(\cdot \mid \theta_j)$ is a distribution over $\Acal_{-i}$, so they sum up to one.

    We now apply this to show conditional independence.
    \begin{equation*}
        \begin{aligned}
            \sigma_T(a_i, \theta) 
            &= \sum_{a_{-i}} \sigma_T(a, \theta)
            &&= \sum_{a_{-i}} \frac{1}{T}\sum_{t=1}^T \pi(\theta)\prod_{j\in\Ncal} x_{j,t}(a_j \mid \theta_j) \\
            &&&= \pi(\theta) \cdot \frac{1}{T}\sum_{t=1}^T x_i(a_i \mid \theta_i) \sum_{a_{-i}} \prod_{j \neq i} x_{j,t}(a_j \mid \theta_j) \\
            &&&= \pi(\theta) \cdot \frac{1}{T}\sum_{t=1}^T x_i(a_i \mid \theta_i) 
            \overset{(*)}{=} \pi(\theta) \cdot \sigma_T(a_i \mid \theta_i)
        \end{aligned}
    \end{equation*}

    \paragraph{Incentive compatibility.} 
    We need to show that for every player $i$, every type $\theta_i \in \Theta_i$, and every deviation $d^{\theta_i} \in \mathcal{D}$
    \begin{equation*}
    \lim_{T\rightarrow \infty}
        \expec_{(\theta, a) \sim \sigma_T}
        \Biggl[
            \sum_{a'_i \in \Acal_i}
            d^{\theta_i}(a'_i,a_i)\,
            u_i(a'_i,a_{-i} \mid \theta_i) \bigg|
            \theta_i
        \Biggr]
        \le 0.
    \end{equation*}
    
    Remember that the players play the strategy game in expected-utility formulation as described in \cref{sec:expected_utility_formulation}.
    In particular, the ex-ante utility for playing a vector $x_i \in X_i := \prod_{\theta_i \in \Theta_i} \Delta(\Acal_i)$ is
    \begin{equation*}
        u_i(x_i, x_{-i}) = \mathbb{E}_{\theta \sim \pi, a \sim x(\cdot \mid \theta)} [u_i(a_i, a_{-i})] 
        = \sum_{\theta, a} \left( \pi(\theta) \cdot \prod_{j \in \Ncal} x_j(a_j \mid \theta_j) \right) ~ u_i(a_i, a_{-i} \mid \theta_i).
    \end{equation*}
    
    This is essentially a complete-information game, and we can apply Proposition 2.1 from \cite{ahunbay2025semicoarsecorrelatedequilibrialpbased}. The proposition states that, for any tangential $h: X \to \mathbb{R}$ with bounded, Lipschitz continuous gradients, the following inequality holds, assuming all players implement projected gradient ascent with suitably chosen step sizes.
    \begin{equation*}
        \frac{1}{T} \sum_{t = 1}^{T} \sum_{i \in \Ncal} \langle \nabla_i h(x_t), \nabla_i u_i(x_t) \rangle \leq \frac{1}{T} \cdot \left( \frac{1}{\eta_T} + \frac{1}{\eta_1} + \sum_{t = 1}^{T} \eta_t \right) \cdot const \quad  \stackrel{T \to \infty}{\longrightarrow} 0
    \end{equation*}
    The constant depends on the game and the function $h$. We choose 
    \begin{equation*}
        h(x) = \sum_{i} \sum_{\theta_i} \frac{1}{\pi(\theta_i)} \left( \frac{1}{2} x_i[\theta_i]^T Q_i^{\theta_i} x_i[\theta_i] + x_i[\theta_i]^T q_i^{\theta_i} \right)
    \end{equation*}
    with pairs $(Q_i^{\theta_i}, q_i^{\theta_i})$ that satisfy the tangency, non-negativity, and symmetry conditions.
    $x_i[\theta_i] \in \Delta(\mathcal{A}_i)$ is the probability vector that player $i$ assigns upon observing $\theta_i$. With this $h$, we now show that BSCCE incentive compatibility follows. We can assume that $\pi(\theta_i) > 0$ without loss of generality because if this were not the case, we could remove the type from the game.
    
    Let us first describe the gradients $\nabla_i h(x)$ and $\nabla_i u_i(x)$. They both are vectors in $\mathbb{R}^{\abs{\Theta_i} \cdot \abs{\Acal_i}}$, and, like the strategy vectors $x_i$, they have a block-wise structure: There are $\abs{\Theta_i}$ blocks with $\abs{\Acal_i}$ entries, corresponding to the gradients with respect to $x_i(a_i \mid \theta_i)$. In particular, we find
    \begin{align*}
        \nabla_i h(x)[\theta_i, a_i'] &= \frac{1}{\pi(\theta_i)} \left( \sum_{a_i} Q_i^{\theta_i}(a_i', a_i) x_i[\theta_i, a_i] \right) +  \frac{1}{\pi(\theta_i)} q_i^{\theta_i}(a_i') \\
        \nabla_i u_i(x)[\theta_i, a_i'] &= u_i(a_i', x_{-i} \mid \theta_i) \cdot \pi(\theta_i),
    \end{align*}
    where the indices refer to the position in the vector corresponding to the respective type and action.
    The utility $u_i(a_i', x_{-i} \mid \theta_i)$ is the interim utility of a player that observes type $\theta_i$ and plays action $a_i'$ while the opponents commit to playing $x_{-i}$:
    \begin{align*}
        u_i(a_i', x_{-i} \mid \theta_i) &= \mathbb{E}_{\theta_{-i} \sim \pi(\cdot \mid \theta_i)} \left[ \mathbb{E}_{a_{-i} \sim x_{-i}(\cdot \mid \theta_{-i})} [u_i(a_i', a_{-i} \mid \theta_i)] \right]\\
        &= \sum_{a_{-i}, \theta_{-i}} \left( \pi(\theta_{-i} \mid \theta_i) \prod_{j \neq i} x_j(a_j \mid \theta_j) \right) ~ u_i(a_i', a_{-i} \mid \theta_i).
    \end{align*}
    
    In what follows, we derive the equality 
    \begin{equation*}    
    \frac{1}{T} \sum_{t = 1}^T \sum_{i \in \Ncal} \left\langle \nabla_i h(x), \nabla_i u_i(x) \right\rangle = \sum_{i \in \Ncal} \sum_{\theta_i \in \Theta_i} \mathbb{E}_{(\theta, a) \sim \sigma_T} \left[ \sum_{a_i'} d^{\theta_i}(a_i', a_i) u_i(a_i', a_{-i} \mid \theta_i) \mid \theta_i \right], 
    \end{equation*}
    which allows us to apply the bound from the proposition to our distribution $\sigma_T$.

    We start with the scalar product of $\nabla_i u_i$ and $\nabla_i h$:
    \begin{equation*}
    \label{eq:gradient_scalar_product}
    \begin{aligned}
        \left\langle \nabla_i u_i(x_t), \nabla_i h(x_t) \right\rangle 
        &= \sum_{a_i', \theta_i} \nabla_i u_i(x_t)[\theta_i, a_i'] \cdot \nabla_i h(x_t)[\theta_i, a_i']  \\
        &= \sum_{a_i', \theta_i} u_i(a_i', x_{-i,t} \mid \theta_i) \cdot \left( \left( \sum_{a_i}  Q_i^{\theta_i}(a_i', a_i) \cdot x_{i,t}(a_i \mid \theta_i) \right) + q_i^{\theta_i}(a_i') \right) \\
        &= \sum_{a_i', \theta_i} u_i(a_i', x_{-i,t} \mid \theta_i) \sum_{a_i} \left(  Q_i^{\theta_i}(a_i', a_i) + q_i^{\theta_i}(a_i') \right) \cdot x_{i,t}(a_i \mid \theta_i)
    \end{aligned}
    \end{equation*}
    In the second step, we used our gradients and note that the $\pi(\theta_i)$ terms cancel. We also moved from the vector-based notation for the players' strategies to a conditional-probability notation. In the second step, we applied the identity $\sum_{a_i} x_{i,t}[\theta_i, a_i] = 1$.

    We expand the interim utility terms to obtain
    \begin{equation*}
        \small
        \begin{aligned}
            &= \sum_{a_i', \theta_i} \underbrace{\sum_{a_{-i}, \theta_{-i}} \left( \pi(\theta_{-i} \mid \theta_i) \prod_{j \neq i} x_{j,t}(a_j \mid \theta_j) \right) \cdot u_i(a_i', a_{-i} \mid \theta_i)}_{u_i(a_i', x_{-i,t} \mid \theta_i)} \cdot \sum_{a_i} \left(  Q_i^{\theta_i}(a_i', a_i) + q_i^{\theta_i}(a_i') \right) \cdot x_{i,t}(a_i \mid \theta_i) \\ 
            &= \sum_{a_i', \theta, a} \left( \pi(\theta_{-i} \mid \theta_i) \prod_{j \in \Ncal} x_{j,t}(a_j \mid \theta_j) \right) \cdot u_i(a_i', a_{-i} \mid \theta_i) \cdot \left(  Q_i^{\theta_i}(a_i', a_i) + q_i^{\theta_i}(a_i') \right),  
        \end{aligned}
    \end{equation*}
    where we moved the order of summation and regrouped the player strategies $x_{j,t}$ to one product.

    When we average over time, we can identify $\frac{1}{T} \sum_{t = 1}^{T} \pi(\theta_{-i} \mid \theta_i) \prod_{j \in \Ncal} x_{j,t}(a_j \mid \theta_j) 
$ with $\sigma_T(a, \theta_{-i} \mid \theta_i)$:
    \begin{equation*}
        \begin{aligned} 
            \frac{1}{T} \sum_{t = 1}^{T}\left\langle \nabla_i u_i(x_t), \nabla_i h(x_t), \right\rangle 
            &= \sum_{a_i', a, \theta} \sigma_T(a, \theta_{-i} \mid \theta_i) \cdot u_i(a_i', a_{-i} \mid \theta_i) \cdot \left(  Q_i^{\theta_i}(a_i', a_i) + q_i^{\theta_i}(a_i') \right) \\
            &= \sum_{a_i',\theta_i} \mathbb{E}_{(a, \theta) \sim \sigma_T} \left[ u_i(a_i', a_{-i}  \mid \theta_i) \cdot (Q_i^{\theta_i}(a_i', a_i) + q_i^{\theta_i}(a_i')) \mid \theta_i \right].
        \end{aligned}
    \end{equation*}

    By summing over all players, this establishes the equivalence, so we get
    \begin{equation*}
    \limsup_{T \to \infty}
        \sum_{i \in \Ncal} \sum_{a_i', \theta_i} \mathbb{E}_{(a, \theta) \sim \sigma_T} \left[ u_i(a_i', a_{-i}  \mid \theta_i) \cdot (Q_i^{\theta_i}(a_i', a_i) + q_i^{\theta_i}(a_i') \mid \theta_i \right] \le 0.
    \end{equation*}
    Note that this holds for any choice of matrices $Q_i^{\theta_i}$ and vectors $q_i^{\theta_i}$ ($i \in \Ncal$, $\theta_i \in \Theta_i$) that satisfy symmetry, tangency, and non-negativity. We can thus set all of them to zero aside for one player $i$ and one of its types $\theta_i$. This implies 
    \begin{equation*}
        \limsup_{T \to \infty}
        \mathbb{E}_{(a, \theta) \sim \sigma_T} \left[ u_i(a_i', a_{-i}  \mid \theta_i) \cdot (Q_i^{\theta_i}(a_i', a_i) + q_i^{\theta_i}(a_i')\mid \theta_i \right] \leq 0, \qquad \forall i \in \Ncal, \theta_i \in \Theta_i,
    \end{equation*}
    which in turn concludes the proof.

\end{proof}
\section{Numerical experiments}\label{app:numerics}

\subsection{Hardware Information}
\label{app:hardware_information}

We ran our experiments on a small-scale server with an Intel(R) Xeon(R) Silver 4116 CPU with 48 cores (2.10 GHz) and 192 GiB of RAM.
Our experiments are based on Python 3.12 and Gurobi 12.1. We left the CPU usage to Gurobi. Each experiment took between several seconds and several minutes (see "Complexity" in \cref{sec:convergence_analysis}).

\subsection{Wasserstein Distances in the Incomplete-Information game}
The convergence curves for the squared Wasserstein-2 distance exhibit a periodic "zig-zag" pattern for the uniform prior (Figure 3a).
This is an artifact of the uniform discretization grid aligning perfectly with the canonical continuous BNE actions ($1/2 \theta_i$). 
Thus, the oscillation is a predictable effect rather than an algorithmic or numerical instability.

\begin{figure}[!htb]
    \begin{minipage}{0.48\linewidth}
        \centering $\mathrm{Beta}(\alpha,1)$
    \end{minipage}\hfill
    \begin{minipage}{0.48\linewidth}
        \centering $\mathrm{Beta}(\alpha, \alpha)$
    \end{minipage}
    
    \medskip
    \centering

    \begin{subfigure}{0.48\linewidth}
        \centering
        \includegraphics[width=\linewidth]{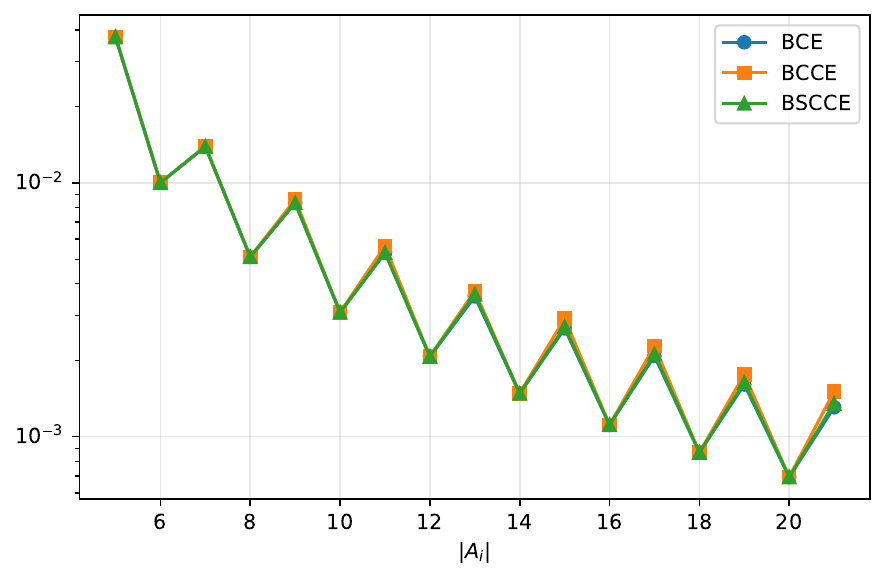}
        \caption{$\mathrm{Beta}(1,1)$ = uniform}
        \label{fig:beta-1-1-uniform}
    \end{subfigure}
    \begin{subfigure}{0.48\linewidth}
        \centering
        \includegraphics[width=\linewidth]{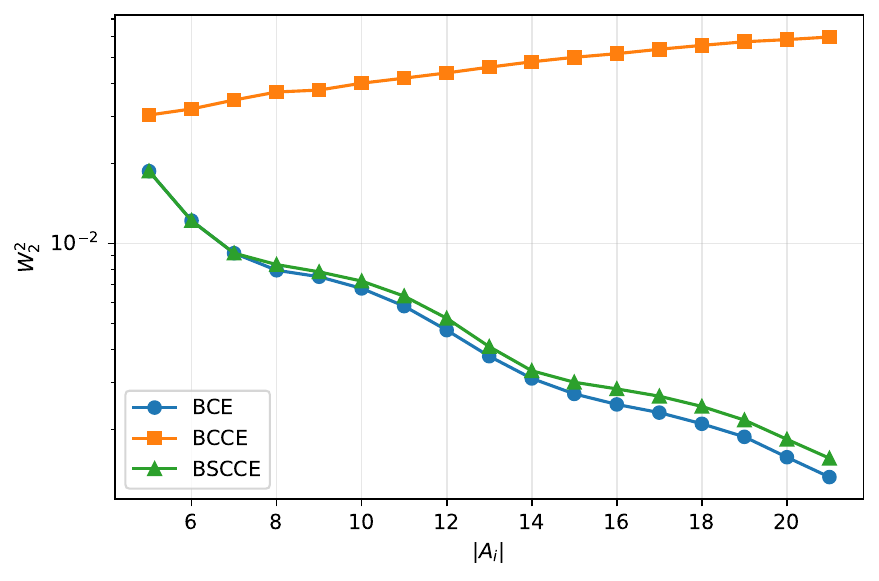}
        \caption{$\mathrm{Beta}(2,2)$}
        \label{fig:beta21-uniform}
    \end{subfigure}

    \begin{subfigure}{0.48\linewidth}
        \centering
        \includegraphics[width=\linewidth]{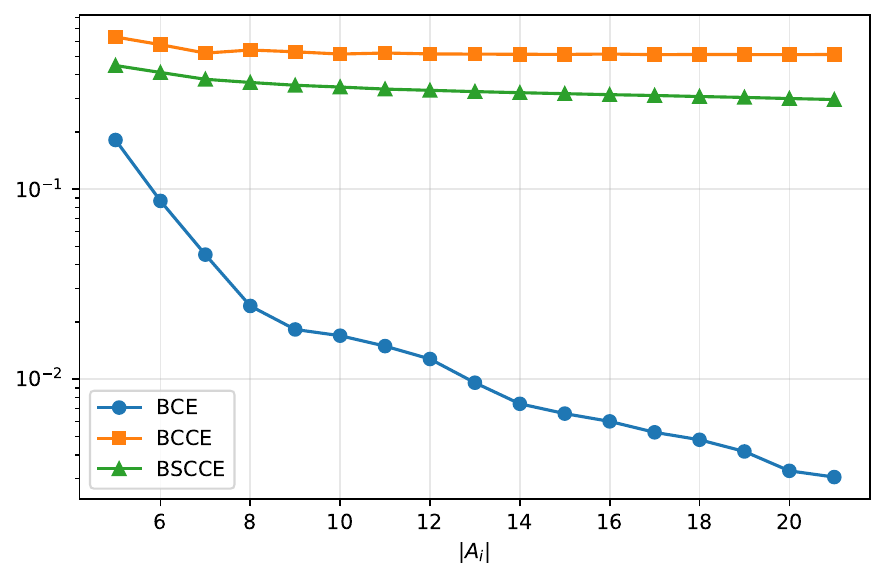}
        \caption{$\mathrm{Beta}(5,1)$}
        \label{fig:beta-5-1-uniform}
    \end{subfigure}
    \begin{subfigure}{0.48\linewidth}
        \centering
        \includegraphics[width=\linewidth]{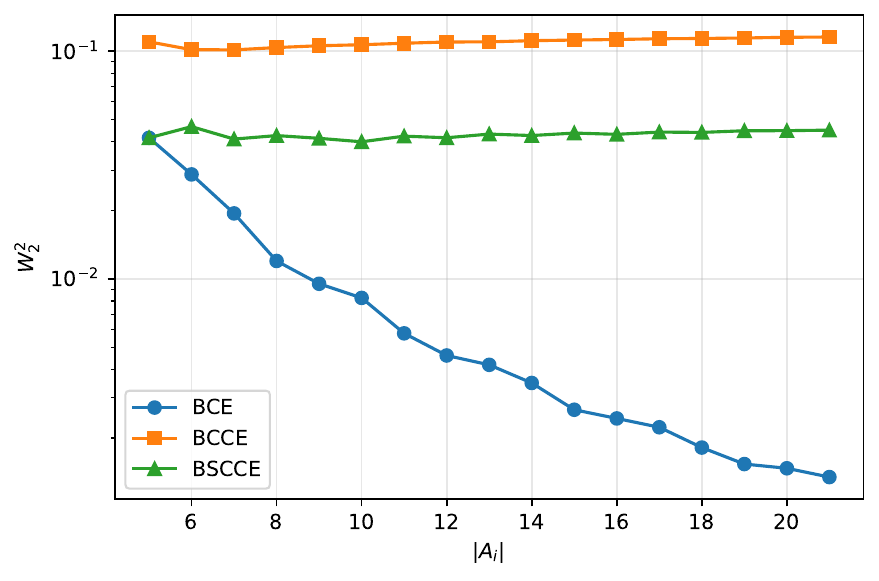}
        \caption{$\mathrm{Beta}(5,5)$}
        \label{fig:beta-2-2-uniform}
    \end{subfigure}

    \begin{subfigure}{0.48\linewidth}
        \centering
        \includegraphics[width=\linewidth]{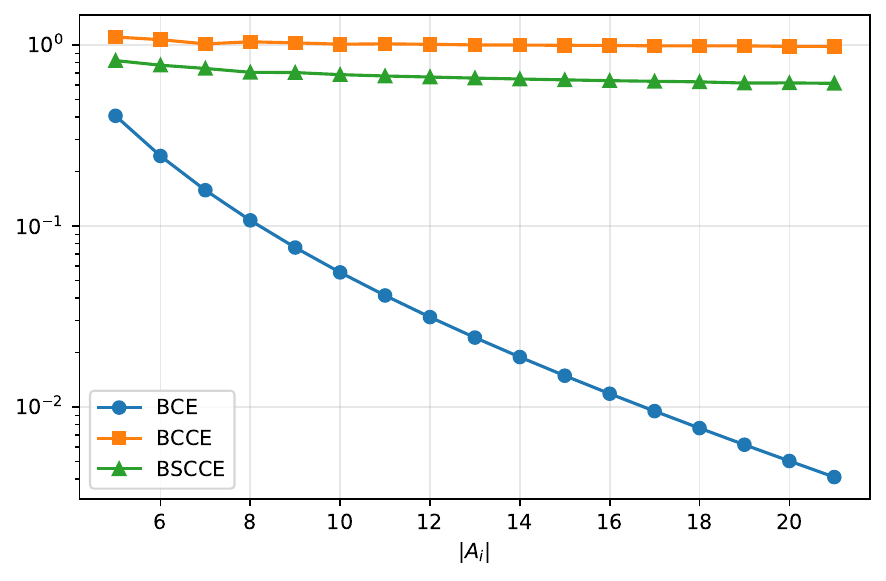}
        \caption{$\mathrm{Beta}(20,1)$}
        \label{fig:beta-20-1-uniform}
    \end{subfigure}
    \begin{subfigure}{0.48\linewidth}
        \centering
        \includegraphics[width=\linewidth]{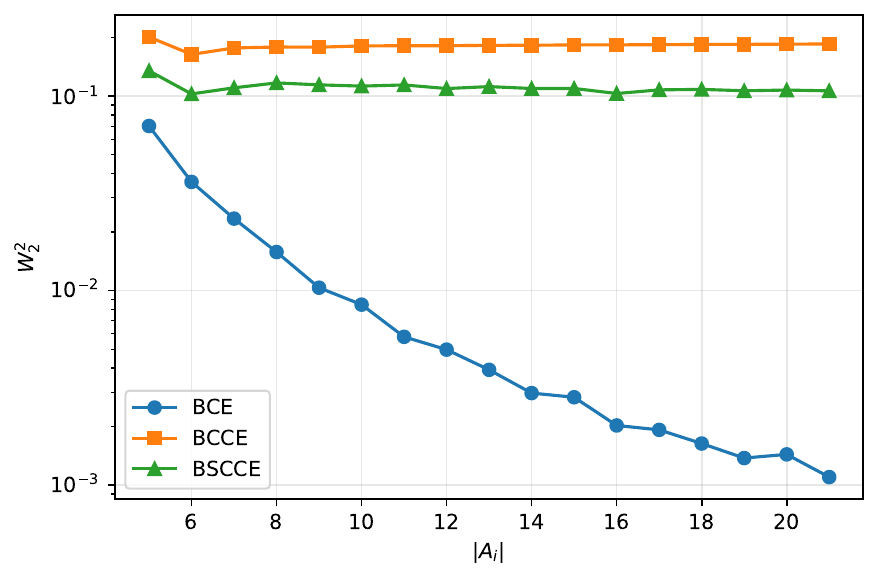}
        \caption{$\mathrm{Beta}(20,20)$}
        \label{fig:beta-20-20-uniform}
    \end{subfigure}
    
    \caption{Dependence of the distance to the canonical BNE on the fineness of uniform discretization for various $\mathrm{Beta}$ priors.}
\end{figure}

\vfill
\pagebreak
\subsection{Equilibrium Polytopes in the Complete-Information Case}

As the previous results suggest that the tightness of the BSCCE concept decreases as priors become degenerate, we investigate the limiting complete-information case.
\Cref{fig:polytopes-CI} illustrates the different equilibrium polytopes for a first-price auction with two players, who both have type $\theta_i = 1$.
In this case, the only possible CEs are that both players perfectly coordinate to play the exact same action out of the top three, whereas the SCCE and CCE sets are much larger.
This implies that, although the SCCE set is smaller than the CCE set, an even stronger equilibrium refinement might be necessary.

\begin{figure}[!htb]
    \centering
    \begin{subfigure}[b]{0.32\linewidth}
        \centering
        \includegraphics[width=\linewidth]{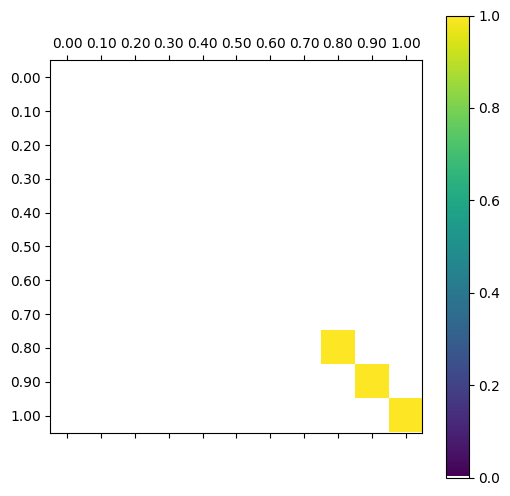}
        \caption{CE polytope}
        \label{fig:ce_polytope-CI}
    \end{subfigure}
    \hfill
    \begin{subfigure}[b]{0.32\linewidth}
        \centering
        \includegraphics[width=\linewidth]{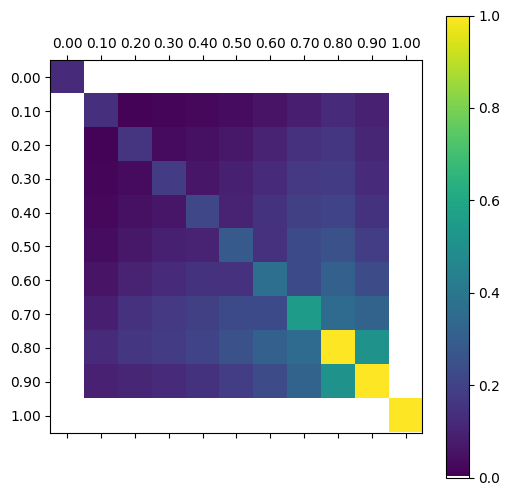}
        \caption{SCCE polytope}
        \label{fig:scce_polytope-CI}
    \end{subfigure}
    \hfill
    \begin{subfigure}[b]{0.32\linewidth}
        \centering
        \includegraphics[width=\linewidth]{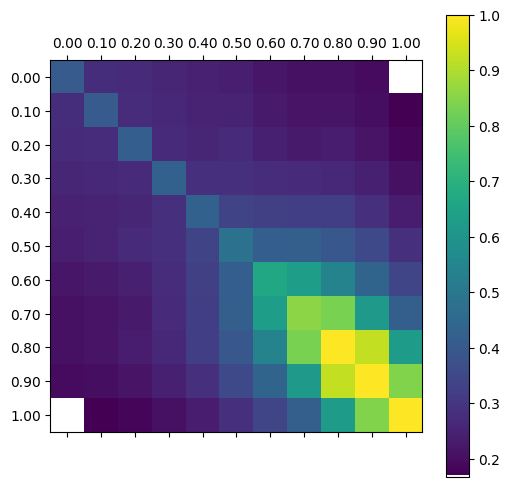}
        \caption{CCE polytope}
        \label{fig:cce_polytope-CI}
    \end{subfigure}
    \caption{Comparison of equilibrium polytopes for a first price auction where both players have type $\theta_i=1$.}
    \label{fig:polytopes-CI}
\end{figure}


\end{document}